\newtheorem{thm}{Theorem}
\renewcommand{\eqref}[1]{Eq.~\ref{#1}}
\newtheorem{Definition}{Definition}
\begin{document}

\title{\bf Dynamic state reconstruction of quantum systems subject to pure decoherence}
\author{Artur Czerwinski}
\email{aczerwin@umk.pl}
\affiliation{Institute of Physics, Faculty of Physics, Astronomy and Informatics \\ Nicolaus Copernicus University,
Grudziadzka 5, 87--100 Torun, Poland}

\begin{abstract}
The article introduces efficient quantum state tomography schemes for qutrits and entangled qubits subject to pure decoherence. We implement the dynamic state reconstruction method for open systems sent through phase-damping channels which was proposed in: Open Syst. Inf. Dyn. 23, 1650019 (2016). In the current article we prove that two distinct observables measured at four different time instants suffice to reconstruct the initial density matrix of a qutrit with evolution given by a phase-damping channel. Furthermore, we generalize the approach in order to determine the optimal criteria for quantum tomography of entangled qubits. Finally, we prove two universal theorems concerning the minimal number of distinct observables required for quantum tomography of qudits. We believe that dynamic state reconstruction schemes bring significant advancement and novelty to quantum tomography since they allow to reduce the number of distinct measurements required to solve the problem, which is important from the experimental point of view.
\end{abstract}
\maketitle

\section{Introduction}
Quantum state tomography, which means the problem of reconstructing the accurate representation of a physical system from measurements, is crucial for quantum information and computation. It has been an important line of research since 1933 when Wolfgang Pauli asked whether a wave function can be determined from the position and momentum distributions. For years much has been written about quantum tomography (e.g. \cite{james01,dariano03,paris04,wasilewski07,kolenderski09,banaszek13}) but still many fundamental questions remain to be answered.

The ability to recover quantum states from measurements is especially relevant to quantum communication which requires well-characterized quantum resources to encode information. Quantum tomography schemes are commonly used in order to evaluate the effectiveness of information processing in case of imperfect measurements scenario \cite{molina04,rosset12}.

Among many different methods of quantum tomography special attention should be paid to the so-called stroboscopic approach which was originally introduced in 1983 by Andrzej Jamiolkowski \cite{jam83}. This approach aims to reconstruct the initial density matrix by the lowest possible number of distinct observables due to utilization of the knowledge about evolution of quantum systems (encoded, for example, in a GKSL equation\cite{gorini76,marmo19}). For a given linear generator of evolution one can compute the index of cyclicity, which expresses the minimal number of distinct observables required for quantum tomography \cite{jam04,czerwin17}. The mean value of each observable is measured at several time instants over different copies of the system (prepared in the same unknown initial state). The stroboscopic approach to quantum tomography appears to have a great potential for experimental applications since it allows to obtain the unknown density matrix by devising few measurement setups.

In \cite{czerwin16b} the stroboscopic tomography has been reformulated and applied to quantum systems which are sent through phase-damping channels (pure decoherence). This kind of quantum dynamics stems from fundamental results of the theory of open systems and properties of such channels have been studied \cite{helm11}. Therefore, it seems utterly justified to analyze this evolution model in the context of dynamic quantum tomography. In \cite{czerwin16b} quantum tomography problem was solved for two specific phase-damping channels -- qubits subject to dephasing and qudits with evolution given by Gaussian semigroup.

In the section \ref{revision} we revise the theoretical background concerning the dynamic quantum tomography model for systems subject to pure decoherence. Then we introduce novel results which are included in the three subsequent sections. First, we prove that one can reconstruct the initial state of a qutrit on the basis of mean values of $2$ observables measured at $4$ distinct time instants. Next we analyze the optimal criteria for quantum tomography of $4-$level quantum systems. Special attention is paid to entangled qubits as a classic example of such systems. Finally, we formulate general theorems concerning qudits subject to pure decoherence. We prove that two observables are sufficient to determine all off-diagonal elements of the density matrix. In addition we give the upper bound for the number of distinct observables required for quantum tomography of $N-$level open systems subject to pure decoherence.

Throughout the article we shall use the following notations. Let $\mathcal{H}$ stand for a finite dimensional Hilbert space ($dim\mathcal{H} = N < \infty$) associated with the physical system. Then by $\mathcal{S(H)}$ we shall denote the state set, i.e. the set of all legitimate density matrices: $\mathcal{S(H)} = \{\rho: \mathcal{H} \rightarrow \mathcal{H},\text{ } \rho \geq 0,\text{ } Tr \rho = 1\}$. Furthermore, we shall use $\mathcal{B(H)}$ in reference to the space of all linear operators on $\mathcal{H}$ and $\mathcal{B_* (H)}$ to denote the space of all Hermitian (self-adjoint) operators on $\mathcal{H}$. Finally $ \mathbb{M}_N (\mathbb{C})$ shall denote the vector space of $N \times N$ complex matrices.

\section{Quantum tomography for phase-damping channels -- revision}\label{revision}

In this section, we shall briefly revise the dynamic approach to quantum tomography of open systems subject to pure decoherence introduced in \cite{czerwin16b}. First, let us recall the definition of a phase-damping channel, assuming that $dim\mathcal{H}=N$.

\begin{Definition}\label{defn1}
A quantum channel of the following form:
\begin{equation}\label{17}
\rho(t) = D(t) \circ \rho(0),
\end{equation}
where $D(t)$ satisfies the conditions:
\begin{equation}\label{conditions}
\begin{split}
&1.\hspace{10pt} \forall_{t \geq 0} \hspace{10pt} D(t) \geq 0, \\
&2. \hspace{10pt} \forall_{t \geq 0} \hspace{10pt} d_{ii} (t) =1 \hspace{10pt}\text{for}\hspace{10pt}  i=1,\dots, N,\\
&3.\hspace{10pt} d_{ij}(0) =1 \hspace{10pt}\text{for}\hspace{10pt} i,j=1,\dots, N.
\end{split}
\end{equation}
is called a phase-damping channel.
\end{Definition}

The conditions enumerated in the definition guarantee that the quantum channel \eqref{17} is a completely positive and trace-preserving (CPTP) map. The matrix $D(t)$ shall be referred to as the dynamic matrix.

The goal of quantum tomography is to reconstruct the initial density matrix $\rho(0)$ based on data accessible from an experiment. We assume that there is a set of observables $\{Q_1, \dots, Q_r\}$ and each of them can be measured at discrete time instants $\{t_1, \dots, t_p\}$. Thus, from an experiment we can obtain a matrix of data, denoted by $[m_i(t_j)]$. The elements of this matrix can be expressed as:
\begin{equation}\label{18}
m_i (t_j) = Tr\left\{ Q_i (D(t_j) \circ \rho(0))\right\},
\end{equation}
where $i=1,\dots, r$ and $j=1, \dots, p$.

In \cite{czerwin16b} it was demonstrated that any continuous time-dependent matrix $D(t) \in \mathbb{M}_N (\mathbb{C})$ can be decomposed in the basis of $\mu$ linearly independent matrices $A_k \in \mathbb{M}_N (\mathbb{C})$:
\begin{equation}\label{decom}
D(t) = \sum_{k=1}^{\mu} \alpha_k (t) A_k,
\end{equation}
where $\{\alpha_k (t)\}$ denotes a set of linearly independent functions $\alpha_k (t): \mathbb{R} \rightarrow \mathbb{C}$.

By implementing the relation between the Hadamard product and the standard matrix product (c.f. \cite{schott05,czerwin16b}), one can write a very useful formula for the measurement results:
\begin{equation}\label{21}
m_i (t_j) = \sum_{k=1}^{\mu} \alpha_k (t_j) Tr\left\{ (Q_i \circ A_k^T) \rho(0) \right\}.
\end{equation}

One can notice that if the measurement of any observable $Q_i$ is performed at distinct time instants $t_1,\dots, t_p$ we obtain a set of $p$ equations:

\begin{equation}\label{22}
\begin{split}
& m_i (t_1) = \sum_{k=1}^{\mu} \alpha_k (t_1) Tr\left\{ (Q_i \circ A_k^T) \rho(0) \right\}, \\
&m_i (t_2) = \sum_{k=1}^{\mu} \alpha_k (t_2) Tr\left\{ (Q_i \circ A_k^T) \rho(0) \right\}, \\
& \vdots \\
& m_i (t_p) = \sum_{k=1}^{\mu} \alpha_k (t_p) Tr\left\{ (Q_i \circ A_k^T) \rho(0) \right\}. 
\end{split}
\end{equation}

On the left-hand side of the system \eqref{22} one has a vector of data which is obtainable from an experiment, whereas on the right-hand side there is a matrix $[\alpha_k (t_j)]$ which is computable based on the algebraic structure of $D(t)$. Thus, one can agree that the figures $Tr\left\{ (Q_i \circ A_k^T) \rho(0) \right\}$ (where $k=1,\dots,\mu$) are computable from \eqref{22} if and only if $rank [\alpha_k (t_j)] = \mu$. This condition claims that the matrix $[\alpha_k (t_j)]$ has to have full rank, which, in practice, means that the number of measurements equals $\mu$ and $[\alpha_k (t_j)]$ is a square matrix.

Due to repeated measurements of the same observable (over distinct copies of the system) we compute the projections of $\rho(0)$ onto a set of operators: $\{Q_i \circ A_1^T, \dots, Q_i \circ A_{\mu}^T \}$. The measurement procedure is then performed for each observable from the set $\{Q_1, \dots, Q_r\}$.

In order to be able to reconstruct the initial density matrix from the projections $Tr\left\{ (Q_i \circ A_k^T) \rho(0) \right\}$, the set of operators $\{Q_i \circ A_k^T\}$ where $ i=1,\dots,r$ and $k=1,\dots,\mu$ have to span $\mathcal{B_* (H)}$, i.e. the space to which $\rho(0)$ belongs. In other words, the set $\{Q_i \circ A_k^T\}$ where $ i=1,\dots,r$ and $k=1,\dots,\mu$ has to be informationally complete.

\section{Dynamic quantum tomography model for a qutrit subject to pure decoherence}\label{qutrit}

\subsection{Preliminaries}

In case of qutrits (i.e. $dim \mathcal{H} =3$), in order to find a basis for the set of density matrices one can refer to the Gell-Mann matrices. We shall follow the notation from \cite{gellmann62} and use $\{\lambda_1,\lambda_2, \dots, \lambda_8\}$ to denote the matrices:

\begin{equation}
\begin{split}
& \lambda_1 = \left [ \begin{matrix} 0 & 1  & 0 \\ 1 & 0 & 0 \\ 0 & 0 &0 \end{matrix} \right], \hspace{0.2cm} \lambda_2 = \left [ \begin{matrix} 0 & -i  & 0 \\ i & 0 & 0 \\ 0 & 0 &0 \end{matrix} \right], \hspace{0.2cm}  \lambda_3 = \left [ \begin{matrix} 1 & 0 & 0 \\ 0 & -1 & 0 \\ 0 & 0 &0 \end{matrix} \right], \\
& \lambda_4 = \left [ \begin{matrix} 0 & 0  & 1 \\ 0 & 0 & 0 \\ 1 & 0 &0 \end{matrix} \right], \hspace{0.25cm}  \lambda_5 = \left [ \begin{matrix} 0 & 0  & -i \\ 0 & 0 & 0 \\ i & 0 &0 \end{matrix} \right], \hspace{0.25cm}  \lambda_6 = \left [ \begin{matrix} 0 & 0 & 0 \\ 0 & 0 & 1 \\ 0 & 1 &0 \end{matrix} \right], \\
&\hspace{0.75cm}  \lambda_7 = \left [ \begin{matrix} 0 & 0 & 0 \\ 0 & 0 & -i \\ 0 & i &0 \end{matrix} \right], \hspace{0.3cm}  \lambda_8 =  \frac{1}{\sqrt{3}}\left [ \begin{matrix} 1 & 0 & 0 \\ 0 & 1 & 0 \\ 0 & 0 & -2 \end{matrix} \right].\\
\end{split}
\end{equation}

The Gell-Mann matrices are a generalization of the Pauli operators and they span the Lie algebra of the $SU(3)$ group. Thus, they satisfy the conditions:
\begin{equation}
\lambda_i = \lambda_i ^*, \hspace{0.5cm} Tr \lambda_i =0 \hspace{0.25cm} \text{and} \hspace{0.25cm} Tr \lambda_i \lambda_j = 2 \delta_{\ij}.
\end{equation}

For a $3-$level quantum system the unknown density matrix $\rho(0) \in \mathcal{S(H)}$ can be decomposed in the basis of the generators of $SU(3)$\cite{genki03}:
\begin{equation}\label{e3.4}
\rho(0) = \frac{1}{3} \mathbb{I}_3 + \frac{1}{2} \sum_{i=1}^8 \langle \lambda_i \rangle \lambda_i,
\end{equation}
where $\langle \lambda_i \rangle$ is the expectation value of the observable $\lambda_i$ and can mathematically be represented as $\langle \lambda_i \rangle = Tr\{\lambda_i \rho(0)\}$.

If one would like to solve the quantum tomography problem for a qutrit without taking advantage of the knowledge about evolution, it would be necessary to measure $8$ distinct observables: $\{\lambda_1,\lambda_2, \dots, \lambda_8\}$ and then use their mean values to complete the formula for $\rho(0)$. Such a static approach to quantum state tomography appears impractical since in a laboratory it would be impossible to define $8$ distinct physical quantities.

\subsection{Evolution model for qutrits}

Let us consider a quantum system such that its trajectory is given by a time-dependent channel:

\begin{equation}\label{e3.1}
\rho(t) = \left[ \begin{matrix} 1 & e^{-\gamma_1 t} & e^{-\gamma_2 t} \\ e^{-\gamma_1 t} & 1 & e^{-\gamma_3 t} \\ e^{-\gamma_2 t} & e^{-\gamma_3 t} & 1 \end{matrix} \right] \circ \rho(0),
\end{equation}

where $\gamma_1, \gamma_2, \gamma_3$ are positive decoherence rates (we assume that $\gamma_1 \neq \gamma_2 \neq \gamma_3$) and $\rho(0)$ denotes the unknown initial density matrix of a $3-$level system ($\rho(0) \in \mathcal{S(H)}$).

One can quickly check that the dynamic matrix from \eqref{e3.1}, i.e.:
\begin{equation}\label{e3.3}
D(t) =  \left[ \begin{matrix} 1 & e^{-\gamma_1 t} & e^{-\gamma_2 t} \\ e^{-\gamma_1 t} & 1 & e^{-\gamma_3 t} \\ e^{-\gamma_2 t} & e^{-\gamma_3 t} & 1 \end{matrix} \right],
\end{equation}
satisfies the conditions  \eqref{conditions}. Therefore, the time-dependent channel from \eqref{e3.1} describes a legitimate evolution, i.e. $\rho(t) \in \mathcal{S(H)}$ holds for all $t\geq 0$. The quantum channel defined in \eqref{e3.1} belongs to the family of phase-damping channels, which means that we can implement the method from \cite{czerwin16b} in order to reconstruct the initial density matrix.

\subsection{Results and analysis}

The main goal of this subsection is to prove that by taking advantage of the knowledge about the evolution (which is given by \eqref{e3.1}) one can significantly decrease the number of distinct observables required for the ability to reconstruct the initial density matrix $\rho(0)$.

We can formulate and prove the following theorem.
\begin{thm}\label{thm3.1}
The initial density matrix of a qutrit with evolution given by a phase-damping channel of the form \eqref{e3.1} can be uniquely determined on the basis of mean values of \textbf{two} Hermitian operators:
\begin{equation}\label{e3.5}
Q_1 =  \left [ \begin{matrix} 1 & 1  & -i \\ 1 & -1 & -i \\ i & i &0 \end{matrix} \right] \text{ and } Q_2 =  \left [ \begin{matrix} \frac{1}{\sqrt{3}} & -i  & 1 \\ i & \frac{1}{\sqrt{3}} & 1 \\ 1 & 1 & -\frac{2}{\sqrt{3}} \end{matrix} \right],
\end{equation}
which are measured at \textbf{four} time instants $t_1, t_2, t_3, t_4$ (the time instants have to be distinct, i.e.: $t_1 \neq t_2 \neq t_3 \neq t_4$).
\end{thm}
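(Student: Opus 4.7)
The plan is to follow the framework laid out in Section~\ref{revision}: decompose $D(t)$ into a basis of linearly independent time-independent matrices, verify that the resulting coefficient matrix $[\alpha_k(t_j)]$ has full rank at four distinct time instants, and finally check that the Hadamard products $Q_i\circ A_k^T$, together with the identity, span $\mathcal{B}_*(\mathcal{H})$, so that the set is informationally complete.

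First I would decompose the dynamic matrix \eqref{e3.3} as in \eqref{decom}. The four linearly independent time-dependent functions appearing in $D(t)$ are $1$, $e^{-\gamma_1 t}$, $e^{-\gamma_2 t}$, $e^{-\gamma_3 t}$, so $\mu=4$. A natural choice of basis is $A_1=\mathbb{I}_3$ together with three symmetric matrices $A_2, A_3, A_4$ whose only non-zero entries are $1$'s placed symmetrically at positions $\{(1,2),(2,1)\}$, $\{(1,3),(3,1)\}$ and $\{(2,3),(3,2)\}$ respectively, so that $A_k^T=A_k$ for every $k$.

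Next I would verify the rank condition. The coefficient matrix $[\alpha_k(t_j)]$ is the $4\times 4$ matrix whose $j$-th row equals $(1,e^{-\gamma_1 t_j},e^{-\gamma_2 t_j},e^{-\gamma_3 t_j})$. Since the exponents $0,\gamma_1,\gamma_2,\gamma_3$ are pairwise distinct, $\{1,e^{-\gamma_1 t},e^{-\gamma_2 t},e^{-\gamma_3 t}\}$ forms a Chebyshev system and evaluation at any four pairwise distinct instants yields a non-singular matrix; equivalently, the substitution $x_j=e^{-t_j}$ reduces the determinant to a generalized Vandermonde determinant which is non-zero. Thus $\mathrm{rank}[\alpha_k(t_j)]=\mu=4$ and the system \eqref{22} can be solved for each of the four projections $\tr\{(Q_i\circ A_k^T)\rho(0)\}$, $k=1,\dots,4$, separately for $i=1$ and $i=2$, giving in total eight scalars computable from the measurement data.

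The final step is a direct computation of the eight operators $Q_i\circ A_k^T$ from \eqref{e3.5}. Because $Q_1,Q_2$ have been engineered precisely for this purpose, these eight matrices reproduce exactly the Gell-Mann basis: $Q_1\circ A_1=\lambda_3$, $Q_2\circ A_1=\lambda_8$; $Q_1\circ A_2=\lambda_1$, $Q_2\circ A_2=\lambda_2$; $Q_1\circ A_3=\lambda_5$, $Q_2\circ A_3=\lambda_4$; $Q_1\circ A_4=\lambda_7$, $Q_2\circ A_4=\lambda_6$. Adjoining $\mathbb{I}_3$ (which carries the trace constraint $\tr\rho(0)=1$) one obtains a basis of $\mathcal{B}_*(\mathcal{H})$, and the decomposition \eqref{e3.4} then reconstructs $\rho(0)$ uniquely. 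I expect the main obstacle to be exactly this last verification: informational completeness hinges entirely on the clever choice of entries $\pm 1,\pm i,\pm 1/\sqrt{3},\pm 2/\sqrt{3}$ inside $Q_1,Q_2$, since if any single entry were zero or differently-tuned the corresponding Hadamard product would miss a Gell-Mann direction and the scheme would collapse; by contrast, the rank condition on $[\alpha_k(t_j)]$ is a routine exponential-Vandermonde argument.
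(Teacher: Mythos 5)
Your proposal is correct and follows essentially the same route as the paper: the same decomposition $D(t)=\mathbb{I}_3+e^{-\gamma_1 t}A_1+e^{-\gamma_2 t}A_2+e^{-\gamma_3 t}A_3$, the same identification of the eight Hadamard products $Q_i\circ A_k$ with the Gell-Mann matrices $\lambda_1,\dots,\lambda_8$, and a Vandermonde-type non-singularity argument for the $4\times 4$ coefficient matrix. The only minor difference is that you justify the rank condition for arbitrary distinct time instants via the Chebyshev-system property of $\{1,e^{-\gamma_1 t},e^{-\gamma_2 t},e^{-\gamma_3 t}\}$, whereas the paper specializes to an arithmetic sequence $t_j=jt$ so that the determinant reduces to an explicit generalized Vandermonde form.
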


\begin{proof}
First, one has to notice that the dynamic matrix $D(t)$ (from \eqref{e3.3}) can be decomposed in the form:
\begin{equation}\label{e3.6}
D(t) = \mathbb{I}_3 + e^{-\gamma_1 t} A_1 + e^{-\gamma_2 t} A_2 + e^{-\gamma_3 t} A_3,
\end{equation}
where
\begin{equation}
A_1 = \left [ \begin{matrix} 0 & 1  & 0 \\ 1 & 0 & 0 \\ 0 & 0 & 0 \end{matrix} \right] \text{  } A_2 = \left [ \begin{matrix} 0 & 0 & 1 \\ 0 & 0 & 0 \\ 1 & 0 & 0 \end{matrix} \right] \text{ and }  A_3 = \left [ \begin{matrix} 0 & 0 & 0 \\ 0 & 0 & 1 \\ 0 & 1 & 0 \end{matrix} \right].
\end{equation}

Applying \eqref{21} into this example, one gets a formula for mean values of the observables:
\begin{equation}\label{e3.7}
\begin{aligned}
{}& m_i (t_j) = Tr\{ (Q_i \circ \mathbb{I}_3) \rho(0)\} + e^{-\gamma_1 t_j} Tr\{ (Q_i \circ A_1) \rho(0)\}+\\& + e^{-\gamma_2 t_j} Tr\{ (Q_i \circ A_2) \rho(0)\}+ e^{-\gamma_3 t_j} Tr\{ (Q_i \circ A_3) \rho(0)\}.
\end{aligned}
\end{equation}

In case of the observable $Q_1$ one can easily calculate that:
\begin{equation}
\begin{split}
& Q_1 \circ \mathbb{I}_3 = \lambda_3,\\
& Q_1 \circ A_1 = \lambda_1,\\
& Q_1 \circ A_2 = \lambda_5,\\
& Q_1 \circ A_3 = \lambda_7,
\end{split}
\end{equation}
which leads to the following formula for the measurement result:
\begin{equation}\label{e3.8}
\begin{aligned}
m_i (t_j) {}&= Tr\{ \lambda_3 \rho(0)\} + e^{-\gamma_1 t_j} Tr\{ \lambda_1 \rho(0)\} + \\& + e^{-\gamma_2 t_j} Tr\{ \lambda_5 \rho(0)\} + e^{-\gamma_3 t_j} Tr\{ \lambda_7 \rho(0)\}.
\end{aligned}
\end{equation}

Assuming that the mean value of the observable $Q_1$ can be obtained at $4$ distinct time instants $t_1, t_2, t_3, t_4$, one gets a matrix equation (compare with \eqref{22}):
\begin{equation}\label{e3.9}
\left[ \begin{matrix} m_1 (t_1) \\ m_1 (t_2) \\ m_1 (t_3) \\ m_1 (t_4) \end{matrix} \right] = 
\left[\begin{matrix} 1 & e^{-\gamma_1 t_1} & e^{-\gamma_2 t_1} & e^{-\gamma_3 t_1} \\ 1 & e^{-\gamma_1 t_2} & e^{-\gamma_2 t_2} & e^{-\gamma_3 t_2} \\ 1 & e^{-\gamma_1 t_3} & e^{-\gamma_2 t_3} & e^{-\gamma_3 t_3} \\ 1 & e^{-\gamma_1 t_4} & e^{-\gamma_2 t_4} & e^{-\gamma_3 t_4}
\end{matrix} \right] \left[ \begin{matrix} Tr\{ \lambda_3 \rho(0)\} \\ Tr\{ \lambda_1 \rho(0)\} \\ Tr\{ \lambda_5 \rho(0)\} \\  Tr\{ \lambda_7 \rho(0)\} \end{matrix} \right].
\end{equation}

In case of the observable $Q_2$ one can quickly calculate that:
\begin{equation}
\begin{split}
& Q_2 \circ \mathbb{I}_3 = \lambda_8,\\
& Q_2 \circ A_1 = \lambda_2,\\
& Q_2 \circ A_2 = \lambda_4,\\
& Q_2 \circ A_3 = \lambda_6,
\end{split}
\end{equation}
which yields the following formula for the mean value of $Q_2$ measured at the moment $t_j$:
\begin{equation}\label{e3.10}
\begin{aligned}
m_i (t_j) {}&= Tr\{ \lambda_8 \rho(0)\} + e^{-\gamma_1 t_j} Tr\{ \lambda_2 \rho(0)\}+\\& +e^{-\gamma_2 t_j} Tr\{ \lambda_4 \rho(0)\} + e^{-\gamma_3 t_j} Tr\{ \lambda_6 \rho(0)\}.
\end{aligned}
\end{equation}

Analogously like before, let us assume that the mean value of the observable $Q_2$ can be obtained at $4$ distinct time instants $t_1, t_2, t_3, t_4$. Then one gets a matrix equation (compare with \eqref{22}):
\begin{equation}\label{e3.11}
\left[ \begin{matrix} m_2 (t_1) \\ m_2 (t_2) \\ m_2 (t_3) \\ m_2 (t_4) \end{matrix} \right] = 
\left[\begin{matrix} 1 & e^{-\gamma_1 t_1} & e^{-\gamma_2 t_1} & e^{-\gamma_3 t_1} \\ 1 & e^{-\gamma_1 t_2} & e^{-\gamma_2 t_2} & e^{-\gamma_3 t_2} \\ 1 & e^{-\gamma_1 t_3} & e^{-\gamma_2 t_3} & e^{-\gamma_3 t_3} \\ 1 & e^{-\gamma_1 t_4} & e^{-\gamma_2 t_4} & e^{-\gamma_3 t_4}
\end{matrix} \right] \left[ \begin{matrix} Tr\{ \lambda_8 \rho(0)\} \\ Tr\{ \lambda_2 \rho(0)\} \\ Tr\{ \lambda_4 \rho(0)\} \\  Tr\{ \lambda_6 \rho(0)\} \end{matrix} \right].
\end{equation}

One can observe that the matrix equations \eqref{e3.9} and \eqref{e3.11} are uniquely solvable if and only if:
\begin{equation}\label{e3.12}
det \left[\begin{matrix} 1 & e^{-\gamma_1 t_1} & e^{-\gamma_2 t_1} & e^{-\gamma_3 t_1} \\ 1 & e^{-\gamma_1 t_2} & e^{-\gamma_2 t_2} & e^{-\gamma_3 t_2} \\ 1 & e^{-\gamma_1 t_3} & e^{-\gamma_2 t_3} & e^{-\gamma_3 t_3} \\ 1 & e^{-\gamma_1 t_4} & e^{-\gamma_2 t_4} & e^{-\gamma_3 t_4}
\end{matrix} \right] \neq 0
\end{equation}

In order to demonstrate that this condition is satisfied we can assume that the time instants $t_1, t_2, t_3, t_4$ are selected according to a certain rule. An assumption concerning the choice of time instants does not cause any loss in generality since one can agree that in an experiment it would be relatively easy to obey constraints imposed on the choice of moments of observation. If we assume that the time instants constitute an arithmetic sequence, defined as: $t_1=t$, $t_2=2t$, $t_3=3t$, $t_4=4t$. Then, the condition \eqref{e3.12} can be rewritten as:
\begin{equation}\label{determinant}
det \left[\begin{matrix} 1 & \xi_1 &  \xi_2 &  \xi_3 \\ 1 & (\xi_1)^2 &  (\xi_2)^2 &  (\xi_3)^2 \\ 1 & (\xi_1)^3 &  (\xi_2)^3 &  (\xi_3)^3 \\ 1 & (\xi_1)^4 &  (\xi_2)^4 &  (\xi_3)^4
\end{matrix} \right] \neq 0,
\end{equation}
where $\xi_i = e^{-\gamma_i t}$. One can observe that the rows in the matrix are linearly independent as long as $\gamma_1 \neq \gamma_2 \neq \gamma_3 \neq 0$, which means that the condition \eqref{determinant} is satisfied. This implies that by solving both matrix equations \eqref{e3.9} and \eqref{e3.11} one can obtain the mean values $Tr\{ \lambda_i \rho(0)\}$ where $i=1,\dots,8$.

The solutions of \eqref{e3.9} and \eqref{e3.11} can be obtained on the basis of the mean values of the two selected observables $Q_1$ and $Q_2$. Thus, one can conclude that the figures $Tr\{ \lambda_i \rho(0)\}$ for $i = 1,\dots, 8$ are computable from experimental data provided one knows the structure of the dynamic matrix $D(t)$.

If the solutions of \eqref{e3.9} and \eqref{e3.11} are substituted to \eqref{e3.4}, one gets the explicit formula for the initial density matrix $\rho(0)$, which means that the knowledge about the evolution of the system given by \eqref{e3.1} combined with the measurements of the two observables $Q_1$ and $Q_2$ leads to the ability to reconstruct the unknown density matrix. This conclusion finishes the proof.
\end{proof}

One should note that the theorem \ref{thm3.1} is existential and does not give a specific formula for the unknown density matrix. Due to the number of parameters, the analytical solution would have a long and unattractive form. Nevertheless, one can choose specific values of the decoherence parameters as well as specific time instants and then one is able to obtain a specific formula for $\rho(0)$ by following the general procedure introduced in this subsection. 

\section{Dynamic quantum tomography model for entangled qubits subject to pure decoherence}\label{entangled}

\subsection{Preliminaries}

Entangled qubits are a special case of quantum systems associated with Hilbert space such that $dim \mathcal{H} =4$. Thus, let us first analyze in general quantum tomography problem for $4-$level quantum systems subject to pure decoherence and then apply the results to entangled qubits as a specific example of such systems.

In order to decompose any density matrix of a $4-$level quantum system one can utilize the generalized Gell-Mann (GGM) basis (see Appendix \ref{appendix}), which consists of $15$ elements \cite{bertlmann08}. There are $6$ symmetric, $6$ antisymmetric and $3$ diagonal GGM matrices. One can expand any density matrix $\rho(0)$ according to the formula \eqref{general}.

In order to determine the initial density matrix of a $4-$level quantum system one would have to measure the mean value of every GGM operator, which would deliver the Bloch vector. However such an approach to quantum state tomography, although suggested in \cite{genki03}, does not seem practical. One is not able to define in a laboratory $15$ distinct measurable quantities.

\subsection{Evolution model for $4-$level systems}

Let us assume that the evolution of a $4-$level quantum system is given by a dynamical map of the form:

\begin{equation}\label{e4.1}
\rho(t) = \left[ \begin{matrix} 1 & e^{-\gamma_1 t} & e^{-\gamma_2 t} & e^{-\gamma_3 t} \\ e^{-\gamma_1 t} & 1 & e^{-\gamma_4 t} &e^{-\gamma_5 t} \\ e^{-\gamma_2 t} & e^{-\gamma_4 t} & 1 & e^{-\gamma_6 t} \\ e^{-\gamma_3 t} & e^{-\gamma_5 t} & e^{-\gamma_6 t} & 1 \end{matrix} \right] \circ \rho(0),
\end{equation}

where $\gamma_1, \dots, \gamma_6$ are positive decoherence rates. We assume that $\gamma_i \neq \gamma_j$ for $i\neq j$ and $\rho(0)\in \mathcal{S(H)}$ denotes the unknown initial density matrix.

One can quickly check that the dynamic matrix $D(t)$ from the definition \eqref{e4.1} satisfies the conditions \eqref{conditions}. Therefore, the channel defined in \eqref{e4.1} describes a legitimate dynamical map. Thus, we shall apply the reasoning from \cite{czerwin16b} in order to investigate how one can benefit from the knowledge about the evolution when solving quantum state tomography problem.

\subsection{Results and analysis}

The goal of this section is to demonstrate that the effectiveness of quantum state tomography can be significantly improved if we assume that the evolution of the quantum system is given by a phase-damping channel. We can formulate and prove a theorem.

\begin{thm}\label{thm4.1}
The initial density matrix $\rho(0)$ of a $4-$level quantum system with evolution given by a phase-damping channel of the form \eqref{e4.1} can be uniquely determined on the basis of mean values of \textbf{two} observables:
\begin{equation}\label{e4.2}
Q_1 =  \left [ \begin{matrix} 1 & 1  & -i & 1 \\ 1 & -1 & -i & 1 \\ i & i & 0 & -i \\ 1 & 1 & i &0 \end{matrix} \right] \text{ and } Q_2 =  \left [ \begin{matrix} \frac{1}{\sqrt{3}} & -i  & 1 & -i \\ i & \frac{1}{\sqrt{3}} & 1 & -i \\ 1 & 1 & -\frac{2}{\sqrt{3}} & 1 \\ i & i & 1& 0 \end{matrix} \right],
\end{equation}
which are measured at \textbf{seven} distinct time instants,\\
and the mean value of the operator:
\begin{equation}\label{e4.3}
\Lambda_3 = \frac{1}{\sqrt{6}}  \left [ \begin{matrix} 1 & 0  & 0 & 0 \\ 0 & 1 & 0 & 0 \\ 0 & 0 & 1 & 0 \\ 0 & 0 & 0 & -3 \end{matrix} \right]
\end{equation}
measured once at the time instant $t=0$.
\end{thm}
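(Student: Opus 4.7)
The strategy parallels the proof of Theorem \ref{thm3.1} but with $\mu = 7$ generators in the decomposition \eqref{decom} of the dynamic matrix. First I would write
\begin{equation*}
D(t) = \mathbb{I}_4 + \sum_{k=1}^{6} e^{-\gamma_k t} A_k,
\end{equation*}
where each $A_k \in \mathbb{M}_4 (\mathbb{C})$ carries $1$'s in exactly the two symmetric off-diagonal positions corresponding to one of the six unordered index pairs $\{(1,2), (1,3), (1,4), (2,3), (2,4), (3,4)\}$ and $0$'s elsewhere. The seven time-dependent coefficients $\{1, e^{-\gamma_1 t}, \dots, e^{-\gamma_6 t}\}$ are linearly independent thanks to the standing hypothesis $\gamma_i \neq \gamma_j$.

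Next I would compute the Hadamard products $Q_i \circ A_k$ for $i=1,2$ and $k=0,1,\dots,6$ (with the convention $A_0 := \mathbb{I}_4$). The entries of $Q_1$ and $Q_2$ are engineered so that $Q_1 \circ \mathbb{I}_4$ and $Q_2 \circ \mathbb{I}_4$ reproduce the two ``upper-left'' diagonal GGM generators $\Lambda_1 = \text{diag}(1,-1,0,0)$ and $\Lambda_2 = \tfrac{1}{\sqrt{3}}\text{diag}(1,1,-2,0)$ (the $4$-level analogues of $\lambda_3$ and $\lambda_8$), while the twelve remaining Hadamard products recover exactly the $6$ symmetric and $6$ antisymmetric GGM generators --- with the symmetric/antisymmetric split dictated by whether the corresponding entry of $Q_i$ is real or purely imaginary. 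The only element of the $15$-dimensional GGM basis missed by this construction is the diagonal generator $\Lambda_3$ from \eqref{e4.3}, precisely because both $Q_1$ and $Q_2$ vanish in their $(4,4)$ entry, so no Hadamard trick applied to $A_0,\dots,A_6$ can produce a nontrivial $(4,4)$ component.

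The main technical step is to show that each $Q_i$ measured at seven distinct time instants $t_1,\dots,t_7$ uniquely determines the seven inner products $Tr\{(Q_i \circ A_k)\rho(0)\}$ via \eqref{21}. This reduces to invertibility of the common $7 \times 7$ coefficient matrix $[\alpha_k(t_j)]$. Following the device of Theorem \ref{thm3.1}, I would choose an arithmetic progression $t_j = jt$ and set $\xi_k := e^{-\gamma_k t}$; factoring $\xi_k$ out of the $k$-th column for $k = 1,\dots,6$ (which is legitimate since $\xi_k > 0$) recasts the determinant, up to the nonzero prefactor $\xi_1 \xi_2 \cdots \xi_6$, as a genuine $7 \times 7$ Vandermonde in the nodes $\{1, \xi_1, \dots, \xi_6\}$, whose value is $\prod_{0 \leq i < l \leq 6}(\xi_l - \xi_i)$ (with the convention $\xi_0 := 1$). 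The hypotheses $\gamma_l > 0$ and $\gamma_i \neq \gamma_l$ make all seven nodes pairwise distinct, so the determinant is nonzero and both $7 \times 7$ systems are uniquely solvable, yielding the $14$ expectation values $\langle \Lambda_1 \rangle$, $\langle \Lambda_2 \rangle$, and the twelve traces of $\rho(0)$ against the off-diagonal GGM matrices.

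Finally, the single direct measurement of $\Lambda_3$ supplies the last missing coefficient $\langle \Lambda_3 \rangle$; the choice $t=0$ is merely conventional, since $\Lambda_3$ is diagonal and a phase-damping channel preserves diagonal entries of $\rho(t)$, so the measurement could in fact be performed at any time. Inserting all $15$ expectation values into the GGM expansion \eqref{general} then reconstructs $\rho(0)$, completing the proof. I expect the main obstacle to be organizational rather than conceptual: the bulk of the work lies in verifying, entry by entry, the twelve Hadamard-product identifications $Q_i \circ A_k \mapsto $ (signed GGM) --- a tedious but direct extension of the qutrit calculation --- while the Vandermonde step is essentially forced by the arithmetic-progression reduction already exploited in Theorem \ref{thm3.1}.
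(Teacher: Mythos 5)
Your proposal follows essentially the same route as the paper's proof: the same decomposition $D(t)=\mathbb{I}_4+\sum_{k=1}^{6}e^{-\gamma_k t}A_k$, the same identification of the fourteen Hadamard products $Q_i\circ A_k$ with the diagonal generators $\Lambda_1,\Lambda_2$ and the twelve off-diagonal GGM matrices, the same two $7\times 7$ systems with the arithmetic-progression choice of time instants, and the same supplementary measurement of $\Lambda_3$. If anything, your explicit reduction of the coefficient determinant to a genuine Vandermonde in the nodes $\{1,\xi_1,\dots,\xi_6\}$ is slightly more rigorous than the paper's appeal to linear independence of the rows, but it is the same argument in substance.
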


\begin{proof}
First, one has to notice that the dynamic matrix $D(t)$ (from the channel definition \eqref{e4.1}) can be decomposed in the form:
\begin{equation}\label{e4.4}
\begin{aligned}
D(t) {}& = \mathbb{I}_4 + e^{-\gamma_1 t} A_1 + e^{-\gamma_2 t} A_2 + e^{-\gamma_3 t} A_3 + e^{-\gamma_4 t} A_4 +\\
& + e^{-\gamma_5 t} A_5 + e^{-\gamma_6 t} A_6,
\end{aligned}
\end{equation}
where
\begin{equation}\label{e4.5}
\begin{split}
& A_1 = \left [ \begin{matrix} 0 & 1  & 0 & 0 \\ 1 & 0 & 0 &0 \\ 0 & 0 & 0 & 0 \\ 0 & 0 & 0 & 0 \end{matrix} \right] \hspace{0.1cm} A_2 = \left [ \begin{matrix} 0 & 0 & 1 & 0 \\ 0 & 0 & 0 & 0 \\ 1 & 0 & 0 & 0  \\ 0 & 0 & 0 & 0 \end{matrix} \right] \hspace{0.1cm}  A_3 = \left [ \begin{matrix} 0 & 0 & 0 &1  \\ 0 & 0 & 0 & 0 \\ 0 & 0 & 0 & 0  \\ 1 & 0 & 0 &0 \end{matrix} \right]\\
& A_4 = \left [ \begin{matrix} 0 & 0 & 0 & 0 \\ 0 & 0 & 1 & 0 \\0 & 1 & 0 & 0  \\ 0 & 0 & 0 & 0 \end{matrix} \right] \hspace{0.1cm} A_5  = \left [ \begin{matrix} 0 & 0 & 0 &0  \\ 0 & 0 & 0 & 1 \\ 0 & 0 & 0 & 0  \\ 0 & 1 & 0 &0 \end{matrix} \right] \hspace{0.1cm} A_6 = \left [ \begin{matrix} 0 & 0 & 0 & 0 \\ 0 & 0 & 0 & 0 \\0 & 0 & 0 & 1  \\ 0 & 0 & 1 & 0 \end{matrix} \right].
\end{split}
\end{equation}

Applying this decomposition into \eqref{21}, one gets a formula for mean values of the observables:
\begin{equation}\label{e4.6}
\begin{aligned}
{}& m_i (t_j)= Tr\{ (Q_i \circ \mathbb{I}_4) \rho(0)\} + e^{-\gamma_1 t_j} Tr\{ (Q_i \circ A_1) \rho(0)\}+\\& + e^{-\gamma_2 t_j} Tr\{ (Q_i \circ A_2) \rho(0)\}+ e^{-\gamma_3 t_j} Tr\{ (Q_i \circ A_3) \rho(0)\}+\\& + e^{-\gamma_4 t_j} Tr\{ (Q_i \circ A_4) \rho(0)\}+ e^{-\gamma_5 t_j} Tr\{ (Q_i \circ A_5) \rho(0)\}+\\& + e^{-\gamma_6 t_j} Tr\{ (Q_i \circ A_6) \rho(0)\}.
\end{aligned}
\end{equation}

In case of the observable $Q_1$ one can notice that:
\begin{equation}\label{e4.7}
\begin{split}
& Q_1 \circ \mathbb{I}_4 = \Lambda_1, \hspace{0.75cm} Q_1 \circ A_1 = \Lambda_s^{12},\\
& Q_1 \circ A_2 = \Lambda_a^{13}, \hspace{0.5cm} Q_1 \circ A_3 = \Lambda_s^{14}\\
& Q_1 \circ A_4 = \Lambda_a^{23}, \hspace{0.5cm} Q_1 \circ A_5 = \Lambda_s^{24},\\
& Q_1 \circ A_6 = \Lambda_a^{34},
\end{split}
\end{equation}
which leads to the following formula for the measurement result:
\begin{equation}\label{e4.8}
\begin{aligned}
 m_i (t_j) {}& = Tr\{ \Lambda_1 \rho(0)\} + e^{-\gamma_1 t_j} Tr\{ \Lambda_s^{12} \rho(0)\} + \\& + e^{-\gamma_2 t_j} Tr\{ \Lambda_a^{13} \rho(0)\} + e^{-\gamma_3 t_j} Tr\{ \Lambda_s^{14} \rho(0)\} + \\& +e^{-\gamma_4 t_j} Tr\{ \Lambda_a^{23} \rho(0)\} + e^{-\gamma_5 t_j} Tr\{ \Lambda_s^{24} \rho(0)\}+\\&+ e^{-\gamma_6 t_j} Tr\{ \Lambda_a^{34} \rho(0)\}.
\end{aligned}
\end{equation}

Assuming that the mean value of the observable $Q_1$ can be obtained at $7$ distinct time instants $t_1, t_2, t_3, t_4,t_5,t_6,t_7$, one gets a matrix equation (analogous to \eqref{e3.11}):
\begin{equation}\label{e4.9}
\left[ \begin{matrix} m_1 (t_1) \\ m_1 (t_2) \\ \vdots \\ m_1 (t_7)  \end{matrix} \right] = 
\left[\begin{matrix} 1 & e^{-\gamma_1 t_1} & \cdots & e^{-\gamma_6 t_1} \\ 1 & e^{-\gamma_1 t_2} & \cdots & e^{-\gamma_6 t_2} \\ \vdots & \vdots & \ddots & \vdots \\ 1 & e^{-\gamma_1 t_7} & \cdots & e^{-\gamma_6 t_7}
\end{matrix} \right] \left[ \begin{matrix} Tr\{ \Lambda_1 \rho(0)\} \\ Tr\{ \Lambda_s^{12} \rho(0)\} \\ \vdots \\  Tr\{ \Lambda_a^{34} \rho(0)\} \end{matrix} \right].
\end{equation}

On the other hand, for $Q_2$ one can quickly obtain:
\begin{equation}\label{e4.10}
\begin{split}
& Q_2 \circ \mathbb{I}_4 = \Lambda_2, \hspace{0.75cm} Q_2 \circ A_1 = \Lambda_a^{12},\\
& Q_2 \circ A_2 = \Lambda_s^{13}, \hspace{0.5cm} Q_2 \circ A_3 = \Lambda_a^{14}\\
& Q_2 \circ A_4 = \Lambda_s^{23}, \hspace{0.5cm} Q_2 \circ A_5 = \Lambda_a^{24},\\
& Q_2 \circ A_6 = \Lambda_s^{34},
\end{split}
\end{equation}
which gives us the following equation for the mean value of $Q_2$ at the moment $t_j$:
\begin{equation}\label{e4.11}
\begin{aligned}
 m_i (t_j) {}& = Tr\{ \Lambda_2 \rho(0)\} + e^{-\gamma_1 t_j} Tr\{ \Lambda_a^{12} \rho(0)\} + \\& + e^{-\gamma_2 t_j} Tr\{ \Lambda_s^{13} \rho(0)\} + e^{-\gamma_3 t_j} Tr\{ \Lambda_a^{14} \rho(0)\} + \\& +e^{-\gamma_4 t_j} Tr\{ \Lambda_s^{23} \rho(0)\} + e^{-\gamma_5 t_j} Tr\{ \Lambda_a^{24} \rho(0)\}+\\&+ e^{-\gamma_6 t_j} Tr\{ \Lambda_s^{34} \rho(0)\}.
\end{aligned}
\end{equation}

We operate in the same vein as for the observable $Q_1$ and after $7$ measurements at distinct time moments, we get a matrix equation:
\begin{equation}\label{e4.12}
\left[ \begin{matrix} m_2 (t_1) \\ m_2 (t_2) \\ \vdots \\ m_2 (t_7)  \end{matrix} \right] = 
\left[\begin{matrix} 1 & e^{-\gamma_1 t_1} & \cdots & e^{-\gamma_6 t_1} \\ 1 & e^{-\gamma_1 t_2} & \cdots & e^{-\gamma_6 t_2} \\ \vdots & \vdots & \ddots & \vdots \\ 1 & e^{-\gamma_1 t_7} & \cdots & e^{-\gamma_6 t_7} \end{matrix} \right] \left[ \begin{matrix} Tr\{ \Lambda_2 \rho(0)\} \\ Tr\{ \Lambda_a^{12} \rho(0)\} \\ \vdots \\  Tr\{ \Lambda_s^{34} \rho(0)\} \end{matrix} \right].
\end{equation}

One can observe that the matrix equations \eqref{e4.9} and \eqref{e4.12} are uniquely solvable if and only if:
\begin{equation}\label{e4.13}
det \left[\begin{matrix} 1 & e^{-\gamma_1 t_1} & \cdots & e^{-\gamma_6 t_1} \\ 1 & e^{-\gamma_1 t_2} & \cdots & e^{-\gamma_6 t_2} \\ \vdots & \vdots & \ddots & \vdots \\ 1 & e^{-\gamma_1 t_7} & \cdots & e^{-\gamma_6 t_7} \end{matrix} \right] \neq 0
\end{equation}

In order to demonstrate that the condition \eqref{e4.13} is satisfied we follow the same strategy as in case of qutrits. We assume that the time instants are not arbitrary and they constitute an arithmetic sequence defined as: $t_1=t$, $t_2=2t$, $\dots$, $t_7=7t$. Then, the condition \eqref{e4.13} can be rewritten as:
\begin{equation}\label{determinant2}
det \left[\begin{matrix} 1 & \xi_1 &  \cdots &  \xi_6 \\ 1 & (\xi_1)^2 & \cdots &  (\xi_6)^2 \\ \vdots & \vdots & \ddots & \vdots\\ 1 & (\xi_1)^7 &  \cdots &  (\xi_6)^7
\end{matrix} \right] \neq 0,
\end{equation}
where $\xi_i = e^{-\gamma_i t}$ (for $i=1,\dots, 6$). One can observe that the rows in the matrix are linearly independent as long as the decoherence rates are positive and satisfy $\gamma_i \neq \gamma_j \neq 0$ (for $i\neq j$). This means that the condition \eqref{determinant2} holds true. Consequently, by solving both matrix equations \eqref{e4.9} and \eqref{e4.12} one obtains the mean values of $14$ operators which belong to the $SU(4)$ generators. In general, a density matrix of a $4-$level quantum system is characterized by $15$ parameters. Thus, one needs to perform the measurement of $\Lambda_3$ at time instant $t=0$ in order to get a complete set of information. When one knows the mean values of the GGM operators, one is able to reconstruct the initial density matrix, which finishes the proof.
\end{proof}

\subsection{Special case: entangled qubits}

In general, the density matrix of a $4-$level quantum system is fully characterized by $15$ real parameters. However, if one has a priori knowledge about the system in question, one can simplify the algebraic structure of the density matrix. In case of $dim \mathcal{H} =4$ one may consider entangled qubits as a specific example of quantum system.

Mixed-state entanglement can be defined by means of statistical mixture of the Bell states \cite{bennet96}. We shall introduce the following form of the initial quantum state of entangled qubits:
\begin{equation}\label{e4.14}
\begin{aligned}
\rho (0) ={}& p_1 \ket{\Phi^+}\bra{\Phi^+} + p_2 \ket{\Phi^-}\bra{\Phi^-} + p_3 \ket{\Psi^+}\bra{\Psi^+} +\\
&+ (1-(p_1+p_2+p_3))\ket{\Psi^-}\bra{\Psi^-}, 
\end{aligned}
\end{equation}
where the state vectors $\{\ket{\Phi^+},  \ket{\Phi^-}, \ket{\Psi^+}, \ket{\Psi^-} \}$ denote the Bell basis in the Hilbert space and $p_1, p_2, p_3$ are probabilities.

The density matrix defined in \eqref{e4.14} covers a wide range of particular mixed entangled states, for example the famous Werner state \cite{werner89}.

In order to investigate the problem of quantum tomography, let us first observe that our state $\rho (0)$ can be decomposed in the basis of $SU(4)$ generators:
\begin{equation}\label{e4.15}
\begin{aligned}
\rho = {}& \frac{1}{4} \mathbb{I}_4 + \frac{p_1 - p_2}{2} \Lambda_s ^{14} + \frac{1-p_1-p_2 -2p_3}{2} \Lambda_s^{23} +\\
&+ \frac{2p_1+2p_2-1}{4} \Lambda_1 + \frac{2p_1 + 2p_2 -1}{4 \sqrt{3}} \Lambda_2 +\\
& + \frac{1-2p_1 - 2p_2}{2\sqrt{6}} \Lambda_3.
\end{aligned}
\end{equation}

One is intuitively aware that we need three independent pieces of information in order to reconstruct the density matrix. Thus, one could measure three operators from the set of the GGM matrices: $\Lambda_s ^{14},  \Lambda_s^{23}, \Lambda_1$ at time instant $t=0$, which would give the probabilities $p_1, p_2, p_3$ required to describe the density matrix. However, we propose a quantum tomography scheme based on dynamic approach which leads to a significant improvement. Let us prove a theorem.

\begin{thm}
Assuming that the evolution of $\rho(0)$ is given by the phase-damping channel defined in \eqref{e4.1}, one can uniquely determine the probabilities $p_1, p_2, p_3$ (which completely characterize the entagled qubits state) by performing the measurement of one observable $Q$ at three distinct time instants, where
\begin{equation}\label{e4.16}
Q =  \left [ \begin{matrix} 1 & 0  & 0 & 1 \\0 & -1 & 1 & 0 \\ 0 & 1 & 0 & 0 \\ 1 & 0 & 0 &0 \end{matrix} \right].
\end{equation}
\end{thm}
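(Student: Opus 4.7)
The plan is to exploit the very sparse structure of the observable $Q$ from \eqref{e4.16}, which places nontrivial entries only at positions $(1,1),(2,2),(1,4),(4,1),(2,3),(3,2)$. Because the operators $A_1,\dots,A_6$ in \eqref{e4.5} each have support on a single pair of off-diagonal entries, most of the Hadamard products $Q\circ A_k$ will vanish, so formula \eqref{21} will collapse from seven terms into only three unknown traces. This is exactly what allows three temporal measurements of a single observable to be sufficient.

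First, I would write $D(t)$ in the decomposition \eqref{e4.4} and apply \eqref{21}. Then, by direct inspection of $Q$ against each $A_k$, I would establish
\begin{equation*}
Q\circ\mathbb{I}_4=\Lambda_1,\quad Q\circ A_3=\Lambda_s^{14},\quad Q\circ A_4=\Lambda_s^{23},
\end{equation*}
while $Q\circ A_1=Q\circ A_2=Q\circ A_5=Q\circ A_6=0$ (since $Q$ is zero at the positions $(1,2),(1,3),(2,4),(3,4)$ and their transposes). Consequently, the measurement identity reduces to
\begin{equation*}
m(t_j)=\Tr\{\Lambda_1\rho(0)\}+e^{-\gamma_3 t_j}\Tr\{\Lambda_s^{14}\rho(0)\}+e^{-\gamma_4 t_j}\Tr\{\Lambda_s^{23}\rho(0)\}.
\end{equation*}

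Next, I would measure $Q$ at three distinct time instants $t_1,t_2,t_3$ and assemble the resulting $3\times 3$ linear system in the three unknown traces. Its coefficient matrix has the same Vandermonde-type structure as in Theorems \ref{thm3.1} and \ref{thm4.1}; choosing an arithmetic progression $t_j=jt$ and using $\gamma_3\neq\gamma_4$ (both positive) ensures a nonzero determinant exactly as argued around \eqref{determinant} and \eqref{determinant2}, so the three traces are uniquely recoverable from the three experimental numbers.

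Finally, I would invoke the expansion \eqref{e4.15} together with the normalization $\Tr\{\Lambda_i\Lambda_j\}=2\delta_{ij}$ to read off
\begin{equation*}
\Tr\{\Lambda_1\rho(0)\}=p_1+p_2-\tfrac12,\;\Tr\{\Lambda_s^{14}\rho(0)\}=p_1-p_2,\;\Tr\{\Lambda_s^{23}\rho(0)\}=1-p_1-p_2-2p_3,
\end{equation*}
which is a triangular linear system for $(p_1,p_2,p_3)$ and therefore trivially invertible. The main obstacle is really only the verification of the Hadamard-product identities (a bookkeeping step that must match the GGM labelling conventions introduced in Section \ref{entangled}); once that is confirmed, the rest of the argument — invertibility of the exponential matrix and the algebraic extraction of the probabilities — follows the same blueprint already validated in the preceding theorems.
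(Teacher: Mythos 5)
Your proposal is correct and follows essentially the same route as the paper's proof: the same Hadamard-product computations showing $Q\circ A_1=Q\circ A_2=Q\circ A_5=Q\circ A_6=0$, the same reduction of \eqref{21} to a three-term formula, and the same Vandermonde-type invertibility argument with $t_j=jt$ and $\gamma_3\neq\gamma_4$. Your explicit triangular system relating the three recovered traces to $(p_1,p_2,p_3)$ is a small addition the paper leaves implicit, but it does not change the argument.
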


\begin{proof}
We assume that the evolution of mixed-entangled qubits is subject to the same dynamical map as introduced in \eqref{e4.1}. Therefore, the decompositions given in the equations \eqref{e4.4}-\ref{e4.6} still hold true.

In case of the observable $Q$ one can immediately notice that
\begin{equation}\label{e4.17}
\begin{split}
& Q \circ \mathbb{I}_4 = \Lambda_1, \hspace{0.75cm} Q \circ A_1 = 0,\\
& Q \circ A_2 = 0, \hspace{0.5cm} Q \circ A_3 = \Lambda_s^{14}\\
& Q \circ A_4 = \Lambda_s^{23}, \hspace{0.5cm} Q \circ A_5 = 0,\\
& Q \circ A_6 = 0,
\end{split}
\end{equation}
where $0$ denotes here the $4\times4$ dimensional zero matrix. Then, the formula for the mean value of the observable $Q$ measured at any time instant $t_j$ can be simplified:
\begin{equation}\label{e4.18}
\begin{aligned}
 m (t_j)  = {}& Tr\{ \Lambda_1 \rho(0)\} + e^{-\gamma_3 t_j} Tr\{ \Lambda_s^{14} \rho(0)\} + \\
& + e^{-\gamma_4 t_j} Tr\{ \Lambda_s^{23} \rho(0)\}.
\end{aligned}
\end{equation}

Assuming that the mean value of $Q$ can be obtained at three distinct time instants $t_1, t_2, t_3$, one gets a matrix equation:
\begin{equation}\label{e4.19}
\left[ \begin{matrix} m (t_1) \\ m (t_2) \\ m (t_3) \end{matrix} \right] = 
\left[\begin{matrix} 1 & e^{-\gamma_3 t_1} & e^{-\gamma_4 t_1} \\ 1 & e^{-\gamma_3 t_2} & e^{-\gamma_4 t_2} \\ 1 & e^{-\gamma_3 t_3} & e^{-\gamma_4 t_3} 
\end{matrix} \right] \left[ \begin{matrix} Tr\{ \Lambda_1 \rho(0)\} \\ Tr\{\Lambda_s^{14} \rho(0)\} \\ Tr\{ \Lambda_s^{23} \rho(0)\} \} \end{matrix} \right].
\end{equation}

One can compute the expectation values: $\{Tr\{ \Lambda_1 \rho(0)\}, Tr\{\Lambda_s^{14} \rho(0)\}, Tr\{ \Lambda_s^{23} \rho(0)\}\}$ on the basis of the experimental data $\{m (t_1), m (t_2), m (t_3) \}$ if and only if:
\begin{equation}\label{e4.20}
\left[\begin{matrix} 1 & e^{-\gamma_3 t_1} & e^{-\gamma_4 t_1} \\ 1 & e^{-\gamma_3 t_2} & e^{-\gamma_4 t_2} \\ 1 & e^{-\gamma_3 t_3} & e^{-\gamma_4 t_3} 
\end{matrix} \right] \neq 0
\end{equation}

Likewise before, we may assume that the time instants are selected in such a way that they are elements of an arithmetic sequence: $t_1=t$, $t_2 = 2 t$, $t_3  = 3t$. Consequently, the condition \eqref{e4.20} can be reformulated:
\begin{equation}\label{determinant3}
det \left[\begin{matrix} 1 & \xi_3 &  \xi_4  \\ 1 & (\xi_3)^2 &  (\xi_4)^2 \\ 1 & (\xi_3)^3 &  (\xi_4)^3 \end{matrix} \right] \neq 0,
\end{equation}
where $\xi_i = e^{-\gamma_i t}$. One can observe that the rows are linearly independent as long as $\gamma_3 \neq \gamma_4$. Other decoherence rates does not affect the state of the entangled qubits and for this reason they might be even zeros.

Assuming that $\gamma_3 \neq \gamma_4$, one can calculate the set of mean values $\{Tr\{ \Lambda_1 \rho(0)\}, Tr\{\Lambda_s^{14} \rho(0)\}, Tr\{ \Lambda_s^{23} \rho(0)\}\}$  which suffice to determine the probabilities $p_1, p_2, p_3$ characterizing the initial density matrix.
\end{proof}

Entangled qubits state defined in \eqref{e4.14}, which is completely characterized by three probabilities $\{p_1, p_2, p_3\}$, can be reconstructed from the mean values of one observable measured at three different time instants.

The quantum tomography scheme for entangled qubits, which requires only one kind of measurement to reconstruct the initial density matrix, fits well to the current trends in the field. In the context of the dynamic approach to tomography, it has been discussed that for some specific evolution models one observable is sufficient to determine the initial density matrix \cite{czerwin16a}. This concept is not only theoretical because experimental realizations have demonstrated recently that a single observable (i.e. one measurement setup) provides all necessary data for quantum state tomography \cite{oren17}. In quantum process tomography there is a similar tendency to develop methods which enable to reconstruct unknown quantum channels from few measurements \cite{kliesch19}.

\section{Quantum tomography of qudits subject to pure decoherence}

\subsection{Preliminaries}

For a qudit one can always use the generalized Gell-Mann (GGM) basis to decompose any density matrix by means of $N^2 - 1$ operators (see details in Appendix \ref{appendix}). The coefficients which appear in the expansion \eqref{general} can be interpreted as mean values of the GGM operators (they constitute the Bloch vector). Thus, one needs to perform $N^2 - 1$ distinct measurements in order to decompose the initial density matrix. This approach seems impractical, especially for higher dimensions since the number of required observables increases quadratically.

\subsection{Evolution model for qudits}

Time evolution of qudit is given by a phase-damping channel according to \eqref{17}, where the dynamic matrix $D(t)$ is symmetric and shall be expressed as:
\begin{equation}\label{e4.21}
D(t) = \left[ \begin{matrix} 1 & e^{-\gamma_{12} t}& e^{-\gamma_{13} t} & \cdots & e^{-\gamma_{1N} t} \\ e^{-\gamma_{12} t} & 1 & e^{-\gamma_{23} t} &\cdots & e^{-\gamma_{2N} t} \\ e^{-\gamma_{13} t} & e^{-\gamma_{23} t} & 1 & \dots &e^{-\gamma_{3N} t}\\ \vdots & \vdots & \vdots & \ddots & \vdots \\  e^{-\gamma_{1N} t} & e^{-\gamma_{2N} t} & e^{-\gamma_{3N} t} &\cdots & 1 \end{matrix} \right],
\end{equation}
where we assume that no two decoherence rates are the same.

\subsection{Results and analysis}

We shall investigate what can be said about the benefits of the dynamic approach to quantum tomography of systems subject to pure decoherence. We formulate two theorems -- one tells about the advantages of the dynamic tomography, whereas the other indicates a serious limitation of this method. The first theorem demonstrates that one can in general very efficiently obtain the off-diagonal elements of the density matrix. On the other hand, the latter describes the worst-case scenario which is connected with computing the diagonal part of the unknown matrix. The theorem gives the formula for the upper limit of the number of distinct observables required for quantum tomography.

\begin{thm}
For a qudit subject to pure decoherence with the dynamic matrix defined in \eqref{e4.21} one can reconstruct all off-diagonal elements of the unknown density matrix $\rho(0)$ from the mean values of two observables measured at $\frac{N(N-1)}{2}+1$ distinct time instants.
\end{thm}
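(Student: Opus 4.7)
The plan is to follow the template of Theorems~\ref{thm3.1} and \ref{thm4.1} adapted to arbitrary dimension $N$. First I would decompose the dynamic matrix \eqref{e4.21} as
\begin{equation*}
D(t) = \mathbb{I}_N + \sum_{1 \leq i < j \leq N} e^{-\gamma_{ij} t} A_{ij},
\end{equation*}
where $A_{ij}$ has a $1$ at positions $(i,j)$ and $(j,i)$ and zeros elsewhere. The identity together with the $\binom{N}{2}$ matrices $A_{ij}$ is linearly independent, so the expansion \eqref{decom} employs exactly $\mu = \frac{N(N-1)}{2}+1$ terms with $\alpha_0(t) = 1$ and $\alpha_{ij}(t) = e^{-\gamma_{ij} t}$, and the coefficient functions are linearly independent because the $\gamma_{ij}$ are pairwise distinct.

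Next I would construct two Hermitian observables $Q_1$ and $Q_2$ whose Hadamard products with the $A_{ij}$ exhaust all $N(N-1)$ off-diagonal GGM operators. For each pair $i<j$, setting $(Q)_{ij}=(Q)_{ji}=1$ gives $Q \circ A_{ij} = \Lambda_s^{ij}$, whereas $(Q)_{ij} = -i$ and $(Q)_{ji} = i$ gives $Q \circ A_{ij} = \Lambda_a^{ij}$. I would then partition the $\binom{N}{2}$ index pairs into two subsets $S_1, S_2$ and place the symmetric pattern in $Q_1$ for $(i,j) \in S_1$ together with the antisymmetric pattern for $(i,j) \in S_2$, and the complementary assignment in $Q_2$. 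Both matrices are automatically Hermitian by conjugate placement, and by construction the $2\binom{N}{2}=N(N-1)$ products $\{Q_1 \circ A_{ij},\, Q_2 \circ A_{ij}\}$ match the $N(N-1)$ off-diagonal GGM generators bijectively. The diagonals of $Q_1, Q_2$ may be chosen arbitrarily (e.g.\ as two diagonal GGM generators), since they only affect $Q_\alpha \circ \mathbb{I}_N$ and do not enter the off-diagonal reconstruction.

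Substituting into \eqref{21} and measuring each observable at $\mu$ time instants yields two linear systems sharing the same coefficient matrix $[\alpha_k(t_j)]$, in complete analogy with \eqref{e3.9}, \eqref{e3.11}, \eqref{e4.9} and \eqref{e4.12}. Choosing the arithmetic schedule $t_j = jt$ and writing $\xi_{ij} = e^{-\gamma_{ij} t}$, this coefficient matrix is a generalized Vandermonde; factoring one copy of each $\xi_{ij}$ out of its column reduces the determinant to a standard Vandermonde with nodes $\{1, \xi_{12}, \xi_{13}, \ldots\}$, which is nonzero as soon as the $\gamma_{ij}$ are positive and pairwise distinct. Inverting the two systems delivers $\mathrm{Tr}\{\Lambda_s^{ij}\rho(0)\}$ and $\mathrm{Tr}\{\Lambda_a^{ij}\rho(0)\}$ for every $i<j$, and by the GGM expansion \eqref{general} these projections determine all off-diagonal entries of $\rho(0)$.

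The Vandermonde solvability is routine and carries over verbatim from the low-dimensional cases. The main obstacle, such as it is, lies in the explicit bookkeeping for $Q_1$ and $Q_2$: one must verify that the chosen partition $S_1 \cup S_2$ actually yields two Hermitian matrices whose Hadamard products enumerate the off-diagonal GGM basis without collision, generalizing the computations \eqref{e3.7}--\eqref{e3.10} and \eqref{e4.7}--\eqref{e4.10}. Once this bijection is in place, the theorem follows immediately.
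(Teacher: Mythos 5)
Your proposal is correct and matches the paper's proof in all essentials: the same decomposition $D(t)=\mathbb{I}_N+\sum_{j<k}e^{-\gamma_{jk}t}A_{jk}$, the same idea of two complementary Hermitian observables whose Hadamard products with the $A_{jk}$ enumerate all symmetric and antisymmetric GGM operators (the paper simply fixes one concrete parity-based partition in \eqref{e4.23}), and the same arithmetic-schedule Vandermonde argument. Your explicit factoring of the generalized Vandermonde determinant into a standard one is, if anything, slightly more detailed than the paper's remark that the rows are linearly independent.
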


\begin{proof}
Let us start by representing $D(t)$ as:
\begin{equation}\label{e4.22}
D(t) = \mathbb{I}_N + \sum_{k=j+1}^N \sum_{j=1}^{N-1} e^{-\gamma_{jk}t} A_{jk},
\end{equation}
where $A_{jk} = E_{jk} + E_{kj}$ and $E_{jk}$ denotes a matrix from the standard basis.

Now we define $2$ observables by means of the GGM operators (see Appendix \ref{appendix}):
\begin{equation}\label{e4.23}
\begin{split}
& Q_1 = \Lambda_1 + \sum_{k= 2i}^{N} \sum_{i=1}^{\kappa_1(N)} \Lambda_s ^{(2i-1)k} +\sum_{k= 2i+1}^{N} \sum_{i=1}^{\kappa_2(N)} \Lambda_a ^{(2i)k},\\
& Q_2 = \Lambda_2 + \sum_{k= 2i+1}^{N} \sum_{i=1}^{\kappa_2(N)} \Lambda_s ^{(2i)k} + \sum_{k= 2i}^{N} \sum_{i=1}^{\kappa_1(N)} \Lambda_a ^{(2i-1)k},
\end{split}
\end{equation}
where
\begin{equation}\label{e4.24}
\begin{split}
& \kappa_1 (N) = \begin{cases} \frac{N-1}{2} &\mbox{  when  } N \mbox{ is odd}  \\ \frac{N}{2} &\mbox{  when  } N \mbox{ is even} \end{cases}\\
&\kappa_2 (N) = \begin{cases} \frac{N-1}{2} &\mbox{  when  } N \mbox{ is odd}  \\ \frac{N-2}{2} &\mbox{  when  } N \mbox{ is even} \end{cases}.
\end{split}
\end{equation}

Then for any time instant $t_j$ the mean values of $Q_1$ and $Q_2$ can be written as:
\begin{equation}\label{e4.25}
\begin{aligned}
m_1 (t_j) = &{} Tr \{\Lambda_1 \rho(0)\} + \sum_{k= 2i}^{N} \sum_{i=1}^{\kappa_1(N)} e^{-\gamma_{(2i-1)k}t} Tr\{\Lambda_s ^{(2i-1)k} \rho(0)\}  \\&+ \sum_{k= 2i+1}^{N} \sum_{i=1}^{\kappa_2(N)} e^{-\gamma_{(2i)k}t} Tr\{\Lambda_a ^{(2i)k} \rho(0)\}
\end{aligned}
\end{equation}
\begin{equation}\label{e4.26}
\begin{aligned}
m_2 (t_j) = &{} Tr \{\Lambda_2 \rho(0)\} + \sum_{k= 2i+1}^{N} \sum_{i=1}^{\kappa_2(N)} e^{-\gamma_{(2i)k}t} Tr\{\Lambda_s ^{(2i)k} \rho(0)\}  \\&+ \sum_{k= 2i}^{N} \sum_{i=1}^{\kappa_1(N)} e^{-\gamma_{(2i-1)k}t} Tr\{\Lambda_a ^{(2i-1)k} \rho(0)\}
\end{aligned}
\end{equation}

One can notice that if we measure both observables at $\frac{N(N-1)}{2}+1$ time instants selected in such a way that $t_1=t$, $t_2 = 2t$, $t_3 = 3 t, \dots$ then from \eqref{e4.25}-\ref{e4.26} we can compute the mean values of $N(N-1)+2$ operators: i.e. the $\frac{N(N-1)}{2}$ symmetric GGM, the $\frac{N(N-1)}{2}$ antisymmetric GGM and the two diagonal GGM $\Lambda_1$ and $\Lambda_2$. This means that the measurements provide $N(N-1)+2$ independent pieces of information which are sufficient to determine the off-diagonal elements of the density matrix.
\end{proof}

\begin{thm}
For a qudit subject to pure decoherence with evolution governed by a dynamic matrix defined in \eqref{e4.21} the upper boundary of the number of distinct observables required for state reconstruction equals: $N-1$.
\end{thm}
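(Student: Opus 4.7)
The plan is to bootstrap on the previous theorem, which already extracts all $N(N-1)$ off-diagonal GGM projections of $\rho(0)$ --- along with the diagonal projections $Tr\{\Lambda_1 \rho(0)\}$ and $Tr\{\Lambda_2 \rho(0)\}$ --- from the mean values of just $Q_1$ and $Q_2$ measured at $\frac{N(N-1)}{2}+1$ distinct instants. What remains is to pin down the $N-3$ diagonal projections $Tr\{\Lambda_k \rho(0)\}$ for $k=3,\dots,N-1$; normalization $Tr\,\rho(0) = 1$ then supplies the last linear equation needed to fix the $N$ diagonal entries.

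The essential observation is that pure decoherence freezes the diagonal of $\rho(t)$ in time, since condition 2 of Definition~\ref{defn1} enforces $d_{ii}(t) = 1$ for every $t \ge 0$. Equivalently, for any purely diagonal observable $Q$ one has $Q \circ A_{jk} = 0$ whenever $j \ne k$, so the general formula \eqref{21} collapses to a single time-independent term $m(t_j) = Tr\{Q \rho(0)\}$. Hence the time-evolution trick that allowed $Q_1,Q_2$ to encode many projections in a single measurement setup is useless for diagonal data: each remaining diagonal GGM projection genuinely requires its own distinct observable.

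Accordingly I would proceed in two stages. First, invoke the previous theorem with $Q_1$ and $Q_2$ to obtain all off-diagonal projections together with $Tr\{\Lambda_1 \rho(0)\}$ and $Tr\{\Lambda_2 \rho(0)\}$. Second, measure each of the remaining diagonal GGM operators $\Lambda_3,\Lambda_4,\dots,\Lambda_{N-1}$ once --- any single time instant (say $t=0$) suffices because the diagonal elements of $\rho(t)$ are stationary. Combined with the normalization constraint this yields all $N^2-1$ Bloch-vector coefficients and reconstructs $\rho(0)$ via the GGM expansion \eqref{general}. The total observable count is $2 + (N-3) = N-1$, which is exactly the upper bound in the theorem.

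The main pitfall is bookkeeping rather than analysis: one must verify that the $N-3$ extra operators $\Lambda_3,\dots,\Lambda_{N-1}$ contribute information linearly independent of what $Q_1,Q_2$ already provide, which follows immediately from the standard orthogonality of the diagonal GGM family with respect to $\Lambda_1,\Lambda_2$. A sanity check against the earlier worked cases --- qutrit with $N-1=2$ (Theorem~\ref{thm3.1}) and $4$-level system with $N-1=3$ and $\Lambda_3$ as the lone additional diagonal observable (Theorem~\ref{thm4.1}) --- confirms the formula.
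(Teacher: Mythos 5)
Your proposal is correct and follows essentially the same route as the paper: invoke the preceding theorem to recover all off-diagonal GGM projections together with $Tr\{\Lambda_1\rho(0)\}$ and $Tr\{\Lambda_2\rho(0)\}$ from $Q_1,Q_2$, then measure the remaining $N-3$ diagonal GGM operators once each, giving $2+(N-3)=N-1$ observables. Your added justification that diagonal observables gain nothing from time evolution (since $d_{ii}(t)=1$ kills the Hadamard products $Q\circ A_{jk}$) makes explicit what the paper only asserts, but it is the same argument.
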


\begin{proof}
We obtain complete knowledge about the off-diagonal elements of the density matrix on the basis of mean values of the operators from \eqref{e4.23}. However the dynamic approach does not give any advantage when it comes to the diagonal elements since a phase-damping channel does not affect the diagonal of $\rho(0)$). Therefore, in the worst-case scenario, one would additionally need to measure the mean values of the remaining $N-3$ diagonal GGM matrices at time instant $t=0$, which means that finally one would use $N-1$ distinct observables to reconstruct the density matrix. The final formula for the initial state could be written according to \eqref{general}.
\end{proof}

For specific qudits the minimal number of distinct observables might be lower than $N-1$ since one may a priori know that there are some zeros on the diagonal of $\rho(0)$.

\section{Conclusions}

In this article we have proved that the dynamic approach to quantum tomography can be an efficient method of state reconstruction for qutrits and entangled qubits. It was demonstrated that one can significantly decrease the number of distinct observables required for state reconstruction if the system is sent through a phase-damping quantum channel. The results are in line with recent developments in quantum tomography where there is a tendency to search for economic methods which aim to reduce the number of distinct measurement setups \cite{gross10,shabani11,huszar12}.  

Furthermore, we formulated two general theorems concerning the criteria for quantum tomography of qudits subject to pure decoherence. The first theorem describes optimal benefits that one can obtain from applying this approach, i.e. all off-diagonal elements can be computed from the mean values of two observables. Whereas the other theorem expresses a key limitation of the method. If an $N-$level open quantum system is subject to pure decoherence, in the worst-case scenario, one needs $N-1$ distinct measurements to reconstruct the initial state.

\section*{Acknowledgments}

The author acknowledges financial support from the Foundation for Polish Science (FNP) (project First Team co-financed by the European Union under the European Regional Development Fund).

\appendix
\section{Generalized Gell-Mann matrices}\label{appendix}
In general, all GGM matrices can be divided into $3$ groups and for $dim \mathcal{H} =N$ they are defined as \cite{genki03,bertlmann08}:

i) $\frac{N(N-1)}{2}$ symmetric GGM matrices
\begin{equation}\label{symmetric}
\Lambda_s ^{jk} = \ket{j}\bra{k} + \ket{k}\bra{j}, \hspace{0.5cm} 1 \leq j < k \leq N,
\end{equation}

ii) $\frac{N(N-1)}{2}$ antisymmetric GGM matrices
\begin{equation}\label{antisymmetric}
\Lambda_a ^{jk} = -i \ket{j}\bra{k} + i \ket{k}\bra{j}, \hspace{0.5cm} 1 \leq j < k \leq N,
\end{equation}

iii) $(N-1)$ diagonal GGM matrices
\begin{equation}\label{diagonal}
\begin{split}
&\Lambda^l = \sqrt{\frac{2}{l(l+1)}} \left( \sum_{j=1}^l \ket{j}\bra{j} - l\ket{l+1}\bra{l+1} \right),\\
&1 \leq l \leq N-1,
\end{split}
\end{equation}
where $\ket{k}, \ket{j}, \ket{l}$ denote vectors from the standard basis.

One can notice that in total we have $N^2-1$ GGM matrices. From the definitions we can see that all GGM matrices are traceless and self-adjoint. It can be proved that they are orthogonal and form a basis. Thus, they are the generators of $SU(N)$.

A density matrix associated with a Hilbert space such that $dim \mathcal{H} =N$ can be decomposed according to \cite{genki03}:
\begin{equation}\label{general}
\rho = \frac{1}{N} \mathbb{I}_N + \frac{1}{2} \vec{s} \cdot \hat{\Lambda},
\end{equation}
where $\vec{s}$ denotes the Bloch vector and $\hat{\Lambda}$ the vector generated from ordered GGM matrices. The elements of the Bloch vector are equal to the mean values of the corresponding GGM operators.


\begin{thebibliography}{1}

\bibitem{james01}
D. F. V. James \textit{et al.}, \textit{Phys. Rev. A} \textbf{64}, 052312 (2001).

\bibitem{dariano03}
G. M. D'Ariano, M. G. A. Paris and M. F. Sacchi, \textit{Advances in Imaging and Electron Physics} \textbf{128}, 205 (2003).

\bibitem{paris04}
M. G. A. Paris and J. \v{R}eh\'{a}\v{c}ek (eds.), Quantum State Estimation (Lecture Notes in Physics), Springer (2004).

\bibitem{wasilewski07}
W. Wasilewski, P. Kolenderski and R. Frankowski, \textit{Phys. Rev. Lett.} \textbf{99}, 123601 (2007).

\bibitem{kolenderski09}
P. Kolenderski and W. Wasilewski, \textit{Phys. Rev. A} \textbf{80}, 015801 (2009).

\bibitem{banaszek13}
K. Banaszek, M. Cramer and D. Gross, \textit{New J. Phys.} \textbf{15}, 125020 (2013).

\bibitem{molina04}
G. Molina-Terriza \textit{et al.}, \textit{Phys. Rev. Lett.} \textbf{92}, 167903 (2004).

\bibitem{rosset12}
D. Rosset \textit{et al.}, \textit{Phys. Rev. A} \textbf{86}, 062325 (2012).

\bibitem{jam83}
A. Jamiolkowski, \textit{Int. J. Theor. Phys.} \textbf{22}, 369 (1983).

\bibitem{gorini76}
V. Gorini, A. Kossakowski and E. C. G. Sudarshan, \textit{J. Math. Phys.} \textbf{17}, 821 (1976).

\bibitem{marmo19}
G. Marmo and S. Pascazio, \textit{Open Syst. Inf. Dyn.} \textbf{26}, 1950011 (2019).

\bibitem{jam04}
A. Jamiolkowski, \textit{Open Syst. Inf. Dyn.} \textbf{11}, 63 (2004).

\bibitem{czerwin17}
A. Czerwinski, \textit{Quantum Stud.: Math. Found.} \textbf{4}, 287-294 (2017).

\bibitem{czerwin16b}
A. Czerwinski and A. Jamiolkowski, \textit{Open Syst. Inf. Dyn.} \textbf{23}, 1650019 (2016).

\bibitem{helm11}
J. Helm \textit{et al.}, \textit{Phys. Rev. A} \textbf{83}, 042103 (2011).

\bibitem{schott05}
J. R. Schott, \textit{Matrix Analysis for Statistics, Second Edition}, John Wiley \& Sons Inc., (2005).

\bibitem{gellmann62}
M. Gell-Mann, \textit{Phys. Rev.} \textbf{125}, 1067 (1962).

\bibitem{genki03}
G. Kimura, \textit{Phys. Lett. A} \textbf{314}, 339 (2003).

\bibitem{bertlmann08}
R. A. Bertlmann and P. Krammer, \textit{J. Phys. A: Math. Theor.} \textbf{41}, 235303 (2008).

\bibitem{bennet96}
C. H. Bennett \textit{et al.}, \textit{Phys. Rev. A} \textbf{54}, 3824 (1996).

\bibitem{werner89}
R. F. Werner, \textit{Phys. Rev. A} \textbf{40}, 4277 (1989).

\bibitem{czerwin16a}
A. Czerwinski, \textit{J. Phys. A: Math. Theor.} \textbf{49}, 075301 (2016).

\bibitem{oren17}
D. Oren \textit{et al.}, \textit{Optica} \textbf{4}, 993 (2017).

\bibitem{kliesch19}
M. Kliesch \textit{et al.}, \textit{Quantum} \textbf{3}, 171 (2019).

\bibitem{gross10}
D. Gross \textit{et al.}, \textit{Phys. Rev. Lett.} \textbf{105}, 150401 (2010).

\bibitem{shabani11}
A. Shabani \textit{et al.}, \textit{Phys. Rev. Lett.} \textbf{106}, 100401 (2011).

\bibitem{huszar12}
F. Huszar and N. M. T. Houlsby, \textit{Phys. Rev. A} \textbf{85}, 052120 (2012).

\end{thebibliography}
\end{document}